\documentclass[runningheads]{llncs}
\usepackage[paperheight=235mm,paperwidth=155mm,textwidth=12.2cm,textheight=19.3cm]{geometry}
\usepackage{bbding}
\usepackage{graphicx}
%



\makeatletter
\RequirePackage[bookmarks,unicode,colorlinks=true]{hyperref}%
   \def\@citecolor{blue}%
   \def\@urlcolor{blue}%
   \def\@linkcolor{blue}%

\def\orcidID#1{\smash{\href{http://orcid.org/#1}{\protect\raisebox{-1.25pt}{\protect\includegraphics{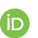}}}}}
\makeatother

\usepackage{todonotes}
\usepackage[utf8]{inputenc}
\usepackage[english]{babel}
\usepackage{amsmath}
\usepackage{amsfonts}
\usepackage{amssymb}
\usepackage{csquotes}
\usepackage{stix}
\usepackage{bussproofs}
\usepackage[ruled,vlined,linesnumbered]{algorithm2e}
\usepackage{listings}
\usepackage{tikz}
\usepackage[strings]{underscore}
\usepackage{cite}
\usepackage{listings}

\newtheorem{fact}{Fact}

\lstdefinelanguage{scala}{
  alsoletter={@,=,>},
  morekeywords={abstract, case, class, def, do, Input, Output, then,
        else, extends, false, free, if, implicit, match,
        object, true, val, var, while, sealed, or,
        for, dependent, null, type, with, try, catch, finally,
        import, final, return, new, override, this, trait,
        private, public, protected, package, throw},
  sensitive=true,
  morecomment=[l]{//},
  morecomment=[s]{/*}{*/},
  morestring=[b]",  
}
\lstset{mathescape,escapeinside={!*}{*!},columns=flexible,morekeywords={if,then,else,return,match,with,end,let,in, data, type, :=, def}}

\newcommand{\starra}[0]{\stackrel{*}{\rightarrow}}
\newcommand{\starlra}[0]{\stackrel{*}{\leftrightarrow}}
\newcommand{\OCBSL}[0]{\mbox{OCBSL}}

\title{Equivalence Checking for Orthocomplemented Bisemilattices in Log-Linear Time\thanks{We acknowledge the financial support of the Swiss National Science Foundation project 200021\_197288
``A Foundational Verifier''. \\
\copyright The Author(s) 2022}}

\author{Simon Guilloud(\Envelope)\orcidID{0000-0001-8179-7549} \and Viktor Kun\v{c}ak\orcidID{0000-0001-7044-9522}}
\institute{EPFL IC LARA, Station 14, CH-1015 Lausanne, Switzerland \\ 
\email{\{Simon.Guilloud,Viktor.Kuncak\}@epfl.ch}}

\authorrunning{S. Guilloud and V. Kun\v{c}ack}

\begin{document}

\maketitle

\begin{abstract}
Motivated by proof checking, we consider the problem of efficiently establishing equivalence of propositional formulas by relaxing the completeness requirements while still providing certain consistency guarantees. To this extent, we present a quasilinear time algorithm to decide the word problem on a natural algebraic structures we call orthocomplemented bisemilattices, a subtheory of Boolean algebra. The starting point for our procedure is a variation of Aho, Hopcroft, Ullman algorithm for isomorphism of trees, which we generalize to directed acyclic graphs. We combine this algorithm with a term rewriting system we introduce to decide equivalence of terms. We prove that our rewriting system is terminating and confluent, implying the existence of a normal form. We then show that our algorithm computes this normal form in log linear (and thus sub-quadratic) time. We provide pseudocode and a minimal working implementation in Scala.
\end{abstract}

\section{Introduction}

Reasoning about propositional logic and its extensions is a basis of many verification algorithms \cite{KroeningStrichman}. Propositional variables may correspond to, for example, sub-formulas in first-order logic theories of SMT
solvers \cite{merzAutomaticVerificationTLA2012, barrettCVC42011, bruttomessoOpenSMTSolver2010a},
hypotheses and lemmas inside proof assistants \cite{wenzelIsabelleFramework2008, harrisonHOLLightOverview2009, naumowiczBriefOverviewMizar2009}, or
abstractions of sets of states.  In particular, it is often of interest to establish that \emph{two propositional formulas are equivalent}.
The equivalence problem for propositional logic is coNP-complete as a negation of propositional satisfiability \cite{Cook10.1145/800157.805047}. 
From proof complexity point of view \cite{ProofComplexityPudlak2019} many known
proof systems, including (non-extended) resolution \cite{Urquhart10.1145/7531.8928} and cutting planes \cite{pudlakLengthsProofs1998} have exponential-sized shortest proofs for certain propositional formulas. SAT and SMT solvers rely on DPLL-style algorithms \cite{DPLL,
ganzingerDPLLFastDecision2004} and do not have
polynomial run-time guarantees on equivalence checking, even if formulas are syntactically close. Proof assistants implement such algorithms as tactics, so they have similar difficulties.
A consequence of this is that implemented systems may take a very long time (or fail to acknowledge) that a large formula is equivalent to its minor variant  differing in, for example, reordering of internal conjuncts or disjuncts. Similar situations also arise in program verifiers \cite{DafnyCalculations2014,AutoActive2015,HamzaETAL19SystemFR,ZeeETAL09IntegratedProofLanguageforImperativePrograms, ZeeETAL08FullFunctionalVerificationofLinkedDataStructures}, where assertions act as lemmas in a proof.  

It is thus natural to ask for an approximation of the propositional equivalence problem:
\emph{can we find an expressive theory supporting many of the algebraic laws of Boolean algebra but for which we can still have a complete and efficient algorithm for formula equivalence?} By efficient, we mean about as fast, up to logarithmic factors, as the simple linear-time syntactic comparison of formula trees.

We can use such an efficient equivalence algorithm to construct more flexible proof systems. Consider any sound proof system for propositional logic and replace the notion of \emph{identical} sub-formulas with our notion of fast equivalence. For example, the axiom schema $p \rightarrow (q \rightarrow p)$ becomes $p \rightarrow (q \rightarrow p')$ for all equivalent $p$ and $p'$. The new system remains sound. It accepts all the previously admissible inference steps, but also some new ones, which
makes it more flexible.

\begin{table}[bth]
    \centering
    \begin{tabular}{r c | r c}
         L1: & $x \sqcup y = y \sqcup x$  & L1': & $x \land y = y \land x$ \\
         L2: & $x \sqcup ( y \sqcup z) = (x \sqcup y) \sqcup z$  & L2': & $x \land ( y \land z) = (x \land y) \land z$ \\
         L3: & $x \sqcup x = x$  & L3': & $x \land x = x$ \\
         L4: & $x \sqcup 1 = 1$  & L4': & $x \land 0 = 0$ \\
         L5: & $x \sqcup 0 = x$  & L5': & $x \land 1 = x$ \\
         L6: & $\neg \neg x = x$  & L6': & same as L6  \\
         L7: & $x \sqcup \neg x = 1$  & L7': & $x \land \neg x = 0$ \\
         L8: & $\neg (x \sqcup y) = \neg x \land \neg y$  & L8': &  $\neg (x \land y) = \neg x \sqcup \neg y$ 
    \end{tabular}
    
    \
    
    \caption{Laws of an algebraic structures $(S, \land, \sqcup, 0, 1, \neg)$.
    Our algorithm is complete (and log-linear time) for structures that satisfy laws L1-L8 and L1'-L8'. We call these structures orthocomplemented bisemilattices (\OCBSL)\label{tab:laws}.}
    
    \begin{tabular}{rc|rc}
         L9: & $x \sqcup (x \land y) = x$ & 
         L9': & $x \land (x \sqcup y) = x$   \\
         L10:  & $x \sqcup (y \land z) = (x \sqcup y) \land (x \sqcup z)$ & 
         L10': & $x \land (y \sqcup z) = (x \land y) \sqcup (x \land z)$ 
    \end{tabular}
    
    \
    
    \caption{Neither the 
    absorption law L9,L9' nor distributivity L10,L10' hold in \OCBSL. 
    Without L9,L9', the operations $\land$ and $\sqcup$ induce different partial orders.
    If an {\OCBSL} satisfies L10,L10' then it also 
    satisfies L9,L9' and is precisely a Boolean algebra\label{tab:lawsBeyond}.}
\end{table}

\subsection{Problem Statement}

This paper proposes to approximate propositional formula equivalence using a new algorithm that solves exactly the word problem for structures we call orthocomplemented bisemilattices (axiomatized in Table~\ref{tab:laws}),
in only log-linear time.
In general, the word problem for an algebraic theory with signature $S$ and axioms $A$ is the problem of determining, given two terms $t_1$ and $t_2$ in the language of $S$ with free variables, whether $t_1 = t_2$ is a consequence of the axioms. Our main interest in the problem is that orthocomplemented bisemilattices (\OCBSL) are a generalisation of Boolean algebra. This structure satisfies a weaker set of axioms that omits the distributivity law as well as its weaker variant, the absorption law (Table~\ref{tab:lawsBeyond}). Hence, this problem is a relaxation ``up to distributivity'' of the propositional formula equivalence. A positive answer implies formulas are equivalent in all Boolean algebras, hence also in propositional logic.

\begin{definition}[Word Problem for Orthocomplemented Bisemilattices]
Consider the signature with two binary operations $\land, \sqcup$,
unary operation $\neg$ and constants, $0, 1$. 
The {\OCBSL}-word problem is the problem of determining, given two terms 
$t_1$ and $t_2$ in this signature, containing free variables, 
whether $t_1 = t_2$ is a consequence of the axioms L1-L8,L1'-L8' in Table~\ref{tab:laws}.
\end{definition}

\noindent
\textbf{Contribution.}
We present an $\mathcal{O}(n \log^2 (n))$ algorithm for the word problem of orthocomplemented lattices.

We analyze the algorithm to show its correctness and we also present its executable description and a Scala implementation at \url{https://github.com/epfl-lara/OCBSL}.

\subsection{Related Work}

The word problem on \emph{lattices} has been studied in the past. The structure we consider is, in general, \emph{not} a lattice. Whitman \cite{whitmanFreeLattices1941} showed decidability of the word problem on free lattices, essentially by showing that the natural order relation on lattices between two words can be decided by an exhaustive search.
The word problem on \emph{orthocomplemented} \emph{lattices} has been solved typically by defining a suitable sequent calculus for the order relation with a cut rule for transitivity \cite{brunsFreeOrtholattices1976, kalmbachOrthomodularLattices1983}. Because a cut elimination theorem can be proved similarly to the original from Gentzen \cite{Gentzen1935}, the proof space is finite and a proof search procedure can decide validity of the implication in the logic, which translates to the original word problem.

The word problem for free lattices was shown to be in PTIME by Hunt et al. \cite{huntComputationalComplexityAlgebra1987} and the word problem for orthocomplemented lattices was shown to be in PTIME by Meinander \cite{meinanderSolutionUniformWord2010}. Those algorithms essentially rely on similar proof-search methods as the previous ones, but bound the search space. These results make no mention of a specific degree of the polynomial; our analysis suggest that, as described, these algorithms run in $\mathcal O(n^4)$. 
Related techniques of locality have been applied more broadly and also yield
polynomial bounds, with the specific exponents depending on local Horn clauses that axiomatize the theory~\cite{DBLP:journals/jacm/BasinG01, mcallesterAutomaticRecognitionTractability1993}.

Aside from the use in equivalence checking, the problem is additionally of independent interest because {\OCBSL} are a natural weakening of Boolean Algebra and orthocomplemented lattices. They are dual to complemented lattices in the sense illustrated by Figure~\ref{figbsl}. 
A slight weakening of {\OCBSL}, called de Morgan bisemilattice, has been used to simulate electronic circuits \cite{brzozowskiMorganBisemilattices2000, lewisHazardDetectionQuinary1972}. {\OCBSL} may be applicable in this scenario as well. Moreover, our algorithm can also be adapted to decide, in log-linear time, the word problem for this weaker theory.

To the best of our knowledge, no solution was presented in the past for the word problem for orthocomplemented bisemilattices ({\OCBSL}). Moreover, we are not aware of previous log-linear algorithms for the related previously studied theories either. 

\subsection{Overview of the Algorithm}
It is common to represent a term, like a Boolean formula, as an abstract tree. In such a tree, a node corresponds to either a function symbol, a constant symbol or a variable, and the children of a function node represent the arguments of the function. In general, for a symbol function $f$, $f(x,y) \neq f(y,x)$ and the children of a node are stored in a specific order. Commutativity of a function symbol $f$ corresponds to the fact that children of a node labelled by $f$ are instead unordered. Our algorithm thus uses as its starting point a variation of the algorithm of Aho, Hopcroft, and Ullman  \cite{hopcroftDesignAnalysisComputer} for tree isomorphism, as it corresponds to deciding equality of two terms modulo commutativity.
However, the theory we consider contains many more axioms than merely commutativity. Our approach is to find an equivalent set of reduction rules, themselves understood modulo commutativity, that is suitable to compute a normal form of a given formula with respect to those axioms using the ideas of term rewriting \cite{baaderTermRewritingAll1998}. The interest of tree isomorphism in our approach is two-fold: first, it helps to find application cases of our reduction rules, and second, it compares the two terms of our word problem. In the final algorithm, both aspects are realized simultaneously.

\begin{figure}[hbt]
\makebox[\textwidth][c]{
\begin{tabular}{ccc }
    \begin{tikzpicture}[minimum size=10mm, inner sep=0mm, node distance={18mm}, main/.style = {draw, circle}] 
    \node[main] (1) {$a\leq b$}; 
    \node[main] (2) [below left of=1] {$a \sqsubseteq b$};
    \node[main] (3) [below right of=1] {$\neg b \leq \neg a$}; 
    \node[main] (4) [below right of=2] {$\neg b \sqsubseteq \neg a$};
    \draw[double, <->] (1) -- (2);
    \draw[double, <->] (3) -- (4);
    \end{tikzpicture} &
    \begin{tikzpicture}[minimum size=10mm, inner sep=0mm, node distance={18mm}, main/.style = {draw, circle}] 
    \node[main] (1) {$a\leq b$}; 
    \node[main] (2) [below left of=1] {$a \sqsubseteq b$};
    \node[main] (3) [below right of=1] {$\neg b \leq \neg a$}; 
    \node[main] (4) [below right of=2] {$\neg b \sqsubseteq \neg a$};
    \draw[double, <->] (1) -- (4);
    \draw[double, <->] (3) -- (2);
    \end{tikzpicture} &
    \begin{tikzpicture}[minimum size=10mm, inner sep=0mm, node distance={18mm}, main/.style = {draw, circle}] 
    \node[main] (1) {$a\leq b$}; 
    \node[main] (2) [below left of=1] {$a \sqsubseteq b$};
    \node[main] (3) [below right of=1] {$\neg b \leq \neg a$}; 
    \node[main] (4) [below right of=2] {$\neg b \sqsubseteq \neg a$};
    \draw[double, <->] (1) -- (4);
    \draw[double, <->] (3) -- (2);
    \draw[double, <->] (1) -- (2);
    \draw[double, <->] (3) -- (4);
    \draw[double, <->] (1) -- (3);
    \draw[double, <->] (4) -- (2);
    \end{tikzpicture} \\
    && \\
    (a) Complemented lattice & (b) Orthocomplemented bisemilattice & (c) Orthocomplemented lattice
    \end{tabular}
    }
\caption{Bisemilattices satisfying absorption or de Morgan laws.}

\label{figbsl}
\end{figure}
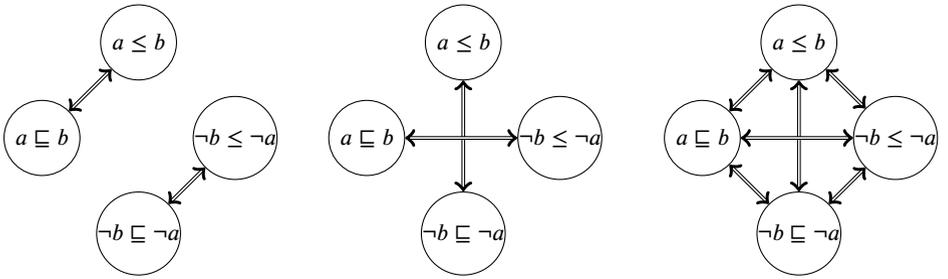

\section{Preliminaries}
\label{sectprel}
\subsection{Lattices and Bisemilattices}
To define and situate our problem, we present a collection of algebraic structures satisfying certain subsets of the laws in Table \ref{tab:laws}.

A structure $(S, \land )$ that is associative (L1), commutative (L2) and idempotent (L3) is a \textbf{semilattice}.
A semilattice induces a partial order relation on $S$ defined by $a \leq b \iff (a\land b) = a$. Indeed, one can verify that $\exists c. (b\land c) = a \iff (b\land a)=a$, from which transitivity follows. Antisymetry is immediate.  In the poset $S$, two elements $a$ and $b$ always have a \textit{greatest lower bound}, or $glb$, $a \land b$. Conversely, a poset such that any two elements have a $glb$ is always a semilattice. 

A structure $(S, \land, 0, 1)$ that satisfies L1, L2, L3, L4, and L5 is a bounded \textbf{upper-semilattice}. Equivalently, $1$ is a maximal element and $0$ a minimal element in the corresponding poset. Similarly, a structure $(S, \sqcup, 0, 1)$ that satisfies L1' to L5' is a bounded \textbf{lower-semilattice}. In that case, we write the corresponding ordering relation $\sqsupseteq$. Note that it points in the direction opposite to $\leq$, so that 1 is always the ``maximum'' element and 0 the ``minimum'' element.

A structure $(S, \land, \sqcup)$ is a \textbf{bisemilattice} if $(S, \land)$ is an upper semilattice and $(S, \sqcup)$ a lower semilattice. There are in general no specific laws relating the two semilattices of a bisemilattice. They can be the same semilattice or completely different. If the bisemilattice satisfies the absorption law (L9), then the two semilattices are related in such a way that $a \leq b \iff a \sqsupseteq b$, i.e. the two orders $\leq$ and $\sqsupseteq$ are equal and the structure is called a lattice.

A bisemilattice is \textbf{consistently bounded} if both semilattices are bounded and if $0_\land = 0_\sqcup = 0$ and $1_\land = 1_\sqcup = 1$. In this paper, we only consider consistently bounded bisemilattices.

A structure $(S, \land, \sqcup, \neg, 0, 1)$ that satisfies L1 to L7 and L1' to L7' is called a \textbf{complemented bisemilattice}, with complement operation $\neg$. A complemented bisemilattice satisfying de Morgan's Law (L8 and L8') is an \textbf{orthocomplemented bisemilattice}.
Similarly, a structure satisfying L1-L9 and L1'-L9' is an \textbf{orthocomplemented lattice}. Both de Morgan laws (L8, L8') and absorption laws (L9 and L9') relate the two semilattices, in a way summarised in Figure~\ref{figbsl}. In bisemilattices, orthocomplementation is (merely) equivalent to $a \leq b \iff \neg b\sqsupseteq \neg a$. 
Indeed, we have:
\[
a \leq b 
\mathop{\iff}^{def} a \land b = a 
\mathop{\iff}^{L8'} \neg a \sqcup \neg b = \neg a  
\mathop{\iff}^{def} \neg b \sqsupseteq \neg a 
\]

Note that, in the presence of L1-L8,L1'-L8', the law of absorption (L9 and L9') is implied by distributivity. In fact, an orthocomplemented bisemilattice with distributivity is a lattice and even a Boolean algebra. In this sense, we can consider orthocomplemented bisemilattices as ``Boolean algebra without distributivity''.

\subsection{Term Rewriting Systems}
We next review basics of term rewriting systems. For a more complete treatment, see \cite{baaderTermRewritingAll1998}.
\begin{definition}
A \textbf{term rewriting system} is a list of rewriting rules of the form $e_l = e_r$ with the meaning that the occurence of $e_l$ in a term $t$ can be replaced by $e_r$. $e_l$ and $e_r$ can contain free variables. To apply the rule,  $e_l$ is unified with a subterm of $t$, and that subterm is replaced by $e_r$ with the same unifier. If applying a rewritting rule to $t_1$ yields $t_2$, we say that $t_1$ reduces to $t_2$ and write $t_1 \rightarrow{t_2}$. We note $\starra$ the transitive closure of $\rightarrow$ and $\starlra$ its transitive symmetric closure.
\end{definition}
Note that an axiomatic system such as L1-L9, L1'-L9' induces a term rewriting system, interpreting equalities from left to right. In that case $t_1 \starlra t_2$ coincides with the validity of the equality $t_1 = t_2$ in the theory given by the axioms.
\begin{definition}
A term rewriting system is terminating if there exists no infinite chain of reducing terms $t_1 \rightarrow t_2 \rightarrow t_3 \rightarrow ...$.
\end{definition}
\begin{fact}
If there is a well-founded order $<$ (or equivalently a measure $m$) on terms such that $t_1 \rightarrow t_2 \implies t_2 < t_1$ (or equivalently $m(t_2)<m(t_1)$) then the term rewriting system is terminating.
\end{fact}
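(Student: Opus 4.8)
The plan is to argue by contradiction directly from the definition of termination. Recall that, by definition, the rewriting system fails to terminate precisely when there exists an infinite chain $t_1 \rightarrow t_2 \rightarrow t_3 \rightarrow \cdots$. So I would assume such a chain exists and let $<$ be a well-founded order satisfying the stated property, namely $t_i \rightarrow t_{i+1} \implies t_{i+1} < t_i$. Applying this implication to every step of the chain produces an infinite strictly descending sequence $t_1 > t_2 > t_3 > \cdots$ with respect to $<$. But well-foundedness of $<$ means exactly that no infinite strictly descending sequence exists, a contradiction. Hence there is no infinite reduction chain and the system is terminating.

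For the equivalent formulation phrased in terms of a measure $m$, I would first make explicit that a measure is a map from terms into some fixed well-founded ordered set — canonically $(\mathbb{N}, <)$ — and note that any such $m$ induces a well-founded order $\prec$ on terms by $s \prec t \iff m(s) < m(t)$: an infinite $\prec$-descending chain of terms would be sent by $m$ to an infinite $<$-descending chain of measure values, which is impossible. The measure hypothesis $m(t_{i+1}) < m(t_i)$ is then just the order hypothesis for $< \,=\, \prec$, so the conclusion follows from the argument of the previous paragraph. Conversely, a well-founded order trivially serves as its own measure (with codomain the order itself), so the two formulations are interchangeable, justifying the ``equivalently'' in the statement.

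There is essentially no obstacle here; the Fact is the defining property of well-foundedness transported along the reduction relation. The single point deserving a word of care is that the measure must land in a well-founded codomain — otherwise the claim is false, as a measure into $\mathbb{Z}$ or into $\mathbb{Q}_{\geq 0}$ shows. In the rest of the paper this Fact is what reduces the task of proving termination of our concrete system to the purely local check of exhibiting one measure $m$ on terms and verifying the inequality $m(t_2) < m(t_1)$ for each rewrite rule applied in an arbitrary one-step context.
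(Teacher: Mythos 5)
Your proof is correct and is the standard argument: the paper states this as a Fact without giving a proof, and your contradiction argument (an infinite reduction chain would yield an infinite strictly descending chain, contradicting well-foundedness, with the measure formulation reduced to the order formulation via the pulled-back order) is exactly the canonical justification. Your remark that the measure's codomain must itself be well-founded is a worthwhile precision that the paper leaves implicit.
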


\begin{definition}
A term rewriting system is \textbf{confluent} if 
$$\forall t_1, t_2, t_3. (t_1 \starra t_2 \land t_1 \rightarrow^* t_3) \implies (\exists t_4. t_2 \starra t_4 \land t_3 \starra t_4)$$
\end{definition}
\begin{theorem}[Church-Rosser Property ]\cite[Chapter~2]{baaderTermRewritingAll1998}
A term rewriting system is confluent if and only if 
$\forall t_1, t_2. (t_1 \starlra t_2) \implies (\exists t_3. t_1 \starra t_3 \land t_2\starra t3)$.
\end{theorem}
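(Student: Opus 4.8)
The plan is to prove the two implications separately: the ``if'' direction is immediate from the definitions, and the ``only if'' direction is a short induction on the length of a conversion.

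For the ``if'' direction, assume the Church--Rosser property and suppose $t_1 \starra t_2$ and $t_1 \starra t_3$. By symmetry and transitivity of $\starlra$ we get $t_2 \starlra t_1 \starlra t_3$, hence $t_2 \starlra t_3$, and the Church--Rosser property applied to this pair directly yields $t_4$ with $t_2 \starra t_4$ and $t_3 \starra t_4$, which is exactly the definition of confluence.

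For the ``only if'' direction, assume the system is confluent and suppose $t_1 \starlra t_2$. Unfolding $\starlra$ as the reflexive--transitive--symmetric closure of $\rightarrow$, there is a finite conversion $t_1 = s_0 \leftrightarrow s_1 \leftrightarrow \dots \leftrightarrow s_n = t_2$, where each $s_i \leftrightarrow s_{i+1}$ denotes either $s_i \rightarrow s_{i+1}$ or $s_{i+1} \rightarrow s_i$. I prove by induction on $n$ that $t_1$ and $t_2$ have a common reduct $t_3$. For $n = 0$ we have $t_1 = t_2$ and take $t_3 = t_1$. For the inductive step, apply the induction hypothesis to the shorter conversion $s_0 \leftrightarrow \dots \leftrightarrow s_{n-1}$ to obtain $u$ with $t_1 \starra u$ and $s_{n-1} \starra u$, then split on the direction of the last step. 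If $s_{n-1} \rightarrow s_n = t_2$, then $s_{n-1} \starra u$ and $s_{n-1} \starra t_2$ form a peak, so confluence yields $t_3$ with $u \starra t_3$ and $t_2 \starra t_3$; composing with $t_1 \starra u$ gives $t_1 \starra t_3$, so this $t_3$ works. If instead $t_2 = s_n \rightarrow s_{n-1}$, then $t_2 \starra s_{n-1} \starra u$, so $t_3 = u$ already satisfies both $t_1 \starra u$ and $t_2 \starra u$.

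The only real subtlety---and the reason an induction is needed rather than a one-line argument---is that confluence closes only \emph{peaks} (divergences of the shape $\leftarrow^* \cdot \rightarrow^*$), whereas a general conversion $t_1 \starlra t_2$ is a zig-zag with alternating directions; the induction on $n$ is precisely the device that peels off one end of the zig-zag at a time, using confluence when the last step points ``outward'' and mere transitivity when it points ``inward'', until the whole conversion has been replaced by a valley. No termination or well-foundedness hypothesis is used, as the statement concerns only the shape of $\rightarrow$; the argument is the classical one, see \cite[Chapter~2]{baaderTermRewritingAll1998}.
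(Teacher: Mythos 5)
Your proof is correct and is precisely the classical argument from the cited reference; the paper itself states this theorem as a known result with a citation to Baader--Nipkow and gives no proof of its own, so there is nothing to diverge from. Both directions are sound: the ``if'' direction is the immediate specialization of the Church--Rosser property to a peak, and the induction on the length of the zig-zag correctly peels off the last step, invoking confluence only when that step points outward from $s_{n-1}$ and plain transitivity otherwise. The only cosmetic mismatch is with the paper's phrasing of $\starlra$ as the \emph{transitive} symmetric closure (reflexivity not stated explicitly); your base case $n=0$ and the use of $t_1 \starra t_1$ implicitly assume the reflexive--transitive closure, which is the standard convention and clearly what the paper intends.
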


A terminating and confluent term rewriting system directly provides decidability of the word problem for the underlying structure, as it is poossible to compute the normal form of two terms to check if they are equivalent. Note that commutativity is not a terminating rewriting rule, but similar results holds if we consider the set of all terms, as well as rewrite rules, modulo commutativity \cite[Chapter~11]{baaderTermRewritingAll1998}, \cite{PetersonStickel1981}.
To efficiently manipulate terms modulo commutativity and achieve
log-linear time, we will employ an algorithm
for comparing trees with unordered children.

\section{Directed Acyclic Graph Equivalence}
\label{sect:treeiso}
The structure of formulas with commutative nodes correspond to the usual mathematical definition of a labelled rooted tree, i.e. an acyclic graph with one distinguished vertex (root) where there is no order on the children of a node. For this reason, we use as our starting point the algorithm of Hopcroft, Ullman and Aho for tree isomorphism \cite[Page 84, Example 3.2]{hopcroftDesignAnalysisComputer},
which has also been studied subsequently \cite{LindellTreesLogspace1992,AlogtimeIsomorphism1997}.

This is best described by singly-rooted, labeled, Directed Acyclic Graphs (DAG), which generalize trees. Any directed acyclic graph can be transformed into a tree by duplicating subgraphs corresponding to nodes with multiple parent, as in Figure~\ref{fig:dagtreeexpansion}. This transformation in general results in an exponential blowup in the number of nodes. Conversely, being able to refer to trees with the structure of a DAG can exponentially shrink needed space in some cases.
\begin{figure}[hbt]
  \begin{minipage}[b]{0.45\textwidth}
\center
\begin{tikzpicture}[distance={15mm}, main/.style = {draw, circle}] 
\node[main] (1) {}; 
\node[main] (2) [below left of=1] {};
\node[main] (3) [below right of=1] {}; 
\node[main] (4) [below right of=2] {};
\draw[->] (1) -- (2);
\draw[->] (1) -- (3);
\draw[->] (2) -- (4);
\draw[->] (3) -- (4);
\node[main] (5) [right of=3]{}; 
\node[main] (6) [above right of=5] {};
\node[main] (7) [below right of=6] {}; 
\node[main] (8) [below right of=7] {};
\node[main] (9) [below left of=5] {};
\draw (6) -- (5);
\draw (6) -- (7);
\draw (5) -- (9);
\draw (7) -- (8);

\end{tikzpicture} 
\caption{A DAG and the corresponding Tree}
\label{fig:dagtreeexpansion}
  \end{minipage}
\hfill
  \begin{minipage}[b]{0.45\textwidth}
\center
\begin{tikzpicture}[ main/.style = {draw, circle}]

\node[main] (1) {}; 
\node[main] (2) [below of=1] {};
\node[main] (3) [below left of=2] {}; 
\node[main] (4) [below right of=2] {};
\draw[->] (1) -- (2);
\draw[->] (2) -- (3);
\draw[->] (2) -- (4);

\node[main] (5) [right of=4]{}; 
\node[main] (6) [above  of=5] {};
\node[main] (7) [above  of=6] {}; 
\draw[->] (7) -- (6);
\draw[->] (6) edge[bend left=40] (5);
\draw[->] (6) edge[bend right=40] (5);

\end{tikzpicture} 
\caption{Two equivalent DAGs with different number of nodes.}
\label{fig:dagequiv}
\end{minipage}
 \end{figure}

Checking for equality between \emph{ordered} trees or DAGs is easy in linear time: we simply recursively check equality between the children of two nodes.
\begin{definition}
\label{def:dagequiv1}Two ordered nodes $\tau$ and $\pi$ with children $\tau_0,...,\tau_m$ and $\pi_0,...,\pi_n$ are equivalent (noted $\tau \sim \pi$) iff
$$
label(\tau) = label(\pi)\text{, }
m=n\text{ and }
\forall i<n, \tau_i \sim \pi_i
$$
\end{definition}
For unordered trees or DAG, the equivalence checking is less trivial, as the naive algorithm has exponential complexity due to the need to find the adequate permutation.
\begin{definition}
\label{def:dagequiv2}Two unordered nodes  $\tau$ and $\pi$ with children $\tau_0,...,\tau_m$ and $\pi_0,...,\pi_n$ are equivalent (noted $\tau \sim \pi$) iff
$$
label(\tau) = label(\pi)\text{, }
m=n \text{ and }
\text{there exists a permutation $p$ s.t. } \forall i<n, \tau_{p(i)} \sim \pi_i
$$
\end{definition}

For trees, note that this definition of equivalence corresponds exactly to isomorphism.
It is known that DAG-isomorphism is GI-complete, so it is conjectured to have complexity greater than PTIME. Fortunately, this does not prevent our solution because our notion of equivalence on DAGs is not the same as isomorphism on DAGs. In particular, two DAGs can be equivalent without having the same number of nodes, i.e. without being isomorphic, as Figure \ref{fig:dagequiv} illustrates.

\begin{algorithm}[hbt]
\SetAlgoLined
\SetKwInOut{Input}{input}
    \SetKwInOut{Output}{output}
    \Input{two DAGs $\tau$ and $\pi$}
    \Output{True if $\tau$ and $\pi$ are equivalent, False else.}
  $codes \gets $HashMap[(String, List[Int]), Int]\;  
  $map \gets $HashMap[Node, Int]\;
 $s_\tau:List \gets ReverseTopologicalOrder(\tau)$\;
 $s_\pi:List \gets ReverseTopologicalOrder(\pi)$\;
  \For{($n$:Node in $s_\tau \mbox{\tt ++} s_\pi$)}{
   $l_n \gets [map(c)$ for $c$ in $children(n)]$\;
   $r_n \gets (label(n), sort(l_n))$\;
   \eIf{$codes$ contains $r$}
   {$map(n)\gets codes(r_n)$\;}
   {$codes(r_n)\gets codes.size$\;
   $map(n)\gets codes(r_n)$\;}
 }
 $\textbf{return}$ $map(\tau) == map(\pi)$
 \caption{DAG equivalence. The operator \mbox{\tt ++} is list concatenation.}
 \label{algo:dag}
\end{algorithm}

Algorithm~\ref{algo:dag} is the generalization of Hopcroft, Ullman and Aho's algorithm. It decides in log-linear time if two labelled (unordered) DAGs are equivalent according to definition~\ref{def:dagequiv1}.

The algorithm works bottom to top. We first sort the DAG in reverse topological order using, for example, Kahn's algorithm\cite{kahnTopologicalSortingLarge1962}. This way, we explore the DAG starting from a leaf and finishing with the root. It is guaranteed that when we treat a node, all its children have already been treated.

The algorithm recursively assigns codes to the nodes of both DAGs recursively. In the unlabelled case:
\begin{list}{•}{}
\item The first node, necessarily a leaf, is assigned the integer $0$
\item The second node gets assigned $0$ if it is a leaf or $1$ if it is a parent of the first node
\item For any node, the algorithm makes a list of the integer assigned to that node's children and sort it (if the node is commutative). We call this the signature of the node. Then it checks if that list has already been seen. If yes, it assigns to the node the number that has been given to other nodes with the same signature. Otherwise, it assigns a new integer to that node and its signature.
\end{list}

\begin{lemma}[Algorithm~\ref{algo:dag} Correctness]
The codes assigned to any two nodes $n$ and $m$ of $s_\tau \mbox{\tt ++} s_\pi$ are equal if and only $n \sim m$.
\end{lemma}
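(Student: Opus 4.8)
The plan is to prove both directions of the biconditional by induction on the position of a node in the reverse topological order $s_\tau \mathbin{\text{\tt ++}} s_\pi$, exploiting the key structural fact guaranteed by that order: when a node $n$ is processed, every child of $n$ has already been processed, so $map(c)$ is well-defined and stable for all $c \in children(n)$. The inductive hypothesis I would carry is the full statement restricted to all pairs of nodes appearing at or before the current positions, namely that for already-processed nodes $n', m'$ one has $map(n') = map(m') \iff n' \sim m'$.

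First I would establish the crucial loop invariant governing $codes$: at any point, two processed nodes $n$ and $m$ satisfy $map(n) = map(m)$ if and only if $r_n = r_m$, i.e. $label(n) = label(m)$ and $sort(l_n) = sort(l_m)$. This follows directly from the code-assignment logic: a node reuses an existing code exactly when its signature $r_n$ is already a key of $codes$, and fresh codes are handed out injectively (each new code equals the current $codes.size$, which is strictly increasing), so equal codes correspond to equal signatures and vice versa. Then for the forward direction, suppose $map(n) = map(m)$; by the invariant $label(n) = label(m)$ and $sort(l_n) = sort(l_m)$. In particular $l_n$ and $l_m$ have the same length, so $n$ and $m$ have the same number of children, $|children(n)| = |children(m)|$. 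Since the two sorted lists of child-codes coincide, there is a permutation $p$ matching them up: $map(children(n)_{p(i)}) = map(children(m)_i)$ for all $i$. The children were all processed before $n$ and $m$, so the induction hypothesis gives $children(n)_{p(i)} \sim children(m)_i$, and together with equal labels this is exactly Definition~\ref{def:dagequiv2} (note the lemma cites Definition~\ref{def:dagequiv1} but the sorting of $l_n$ means the relevant notion is the unordered one), hence $n \sim m$. The backward direction is the symmetric argument run in reverse: from $n \sim m$ we get equal labels, equal numbers of children, and a permutation $p$ with $children(n)_{p(i)} \sim children(m)_i$; the induction hypothesis upgrades each such equivalence to $map(children(n)_{p(i)}) = map(children(m)_i)$, so the multisets of child-codes agree, hence $sort(l_n) = sort(l_m)$ and $r_n = r_m$, and the invariant yields $map(n) = map(m)$.

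The base case is the first node processed (necessarily a leaf, as the text notes): it has no children, so its signature is $(label(n), [\,])$ and it receives code $0$; any other leaf with the same label receives the same code, and leaves trivially satisfy $n \sim m \iff label(n) = label(m)$, so the statement holds. I would also need one small auxiliary observation to make the induction airtight: once $map(n)$ is assigned it is never overwritten (the map is only ever extended, never mutated at an existing key), and similarly $codes$ is only extended, so the invariant proved at processing time persists to the end of the loop — this is what lets us conclude the final comparison $map(\tau) == map(\pi)$ is meaningful and correct.

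The main obstacle, and the place where care is genuinely required rather than routine, is the injectivity of fresh code assignment interacting with the hashing: one must argue that distinct signatures never collide to the same code, which rests on the observation that a brand-new code is always set to $codes.size$ at a moment when that key is absent, so the map from signatures to codes is a well-defined injection on its domain. A secondary subtlety is making the induction precise across the \emph{concatenation} $s_\tau \mathbin{\text{\tt ++}} s_\pi$: a node of $\pi$ may have the same code as a node of $\tau$ processed much earlier, so the induction must be over the global processing order, not separately over each DAG, and one must check that cross-DAG equivalence $\sim$ is handled by exactly the same recursive clause — which it is, since Definitions~\ref{def:dagequiv1} and~\ref{def:dagequiv2} are stated uniformly on nodes regardless of which DAG they inhabit.
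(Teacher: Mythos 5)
Your proposal is correct and follows essentially the same route as the paper's proof: an induction (the paper phrases it as induction on the height of $n$, you as induction over the reverse topological processing order, which is equivalent since children are always processed first) whose core step is the observation that $map(n)=map(m)$ iff $r_n=r_m$, which unfolds into equal labels plus equal sorted child-code lists and hence, by the induction hypothesis, a permutation of pairwise equivalent children. Your write-up is somewhat more careful than the paper's (injectivity of fresh codes, persistence of $codes$ and $map$, the cross-DAG induction, and the remark that the relevant equivalence is the unordered Definition~\ref{def:dagequiv2}), but these are elaborations of the same argument rather than a different approach.
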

\begin{proof}

Let $n$ and $m$ denote any two nodes in what follows. By induction on the height of $n$:
\begin{itemize}
\item In the case where $n$ is a leaf, we have $r_n = (label(n), Nil)$. Note that for any node $n$, $map(n) = codes(r_n)$. Since every time the map $codes$ is updated, it is with a completely new number, $codes(r_n) = codes(r_m)$ if and only if $r_n = r_m$, i.e. iff $label(m) = label(n)$ and $m$ has no children (like $n$).
\item In the case where $n$ has children $n_i$, again $codes(r_n) = codes(r_m)$ if and only if $r_m = r_n$, which is equivalent to ($label(m)=label(n)$ and $sort(l_m) = sort(l_n)$. This means this means there is a permutation of children of $n$ such that $\forall i, codes(n_{p(i)}) = codes(m_i)$. By induction hypothesis, this is equivalent to $\forall i, n_{p(i)} \sim m_i$. Hence we find that $map(n) = map(m)$ if and only if both:
\begin{enumerate}
\item Their labels are equal
\item There exist a permutation $p$ s.t. $n_{p(i)} \sim m_i$
\end{enumerate}
i.e $n$ and $m$ have the same code if and only if $n \sim m$.
\end{itemize}
\end{proof}
\begin{corollary}
The algorithm returns $True$ if and only if $\tau \sim \pi$
\end{corollary}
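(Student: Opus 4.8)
The plan is to derive the corollary directly from the preceding correctness lemma, since the algorithm's return value is exactly the boolean $map(\tau) == map(\pi)$. The only thing to check is that $\tau$ and $\pi$ are legitimate inputs to the lemma, i.e.\ that they actually occur as nodes of $s_\tau \mathop{\mbox{\tt ++}} s_\pi$ and that $map$ is defined on them when the algorithm terminates.

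First I would observe that, because each DAG is singly rooted, every node is reachable from the root, so the root itself is visited by $ReverseTopologicalOrder$; concretely, $\tau$ is the last element of $s_\tau$ and $\pi$ is the last element of $s_\pi$, hence both belong to $s_\tau \mathop{\mbox{\tt ++}} s_\pi$. Next, the main loop iterates over every node of this concatenated list and, in both branches of the conditional, assigns a value to $map(n)$; therefore after the loop $map(\tau)$ and $map(\pi)$ are both well defined. This justifies that the final line $\textbf{return}\ map(\tau) == map(\pi)$ is meaningful.

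Then I would simply instantiate the correctness lemma with $n \gets \tau$ and $m \gets \pi$: it gives $map(\tau) = map(\pi)$ if and only if $\tau \sim \pi$. Combining this equivalence with the fact that the algorithm returns $True$ precisely when $map(\tau) == map(\pi)$ holds yields ``the algorithm returns $True$ if and only if $\tau \sim \pi$'', which is the claim.

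There is essentially no hard step here; the corollary is a specialization of the lemma to the two root nodes. The only point that deserves an explicit sentence — and thus the closest thing to an obstacle — is the remark that the roots are included in the reverse topological orders and so fall under the scope of the lemma, which already quantifies over all nodes of $s_\tau \mathop{\mbox{\tt ++}} s_\pi$.
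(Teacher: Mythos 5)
Your proof is correct and matches the paper's (implicit) argument: the corollary is simply the correctness lemma instantiated at the two root nodes, combined with the fact that the algorithm returns exactly the test $map(\tau) == map(\pi)$. The paper leaves this instantiation unstated, and your added remark that the roots appear in the reverse topological orders (so the lemma applies to them) is a harmless and reasonable bit of extra care.
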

\paragraph{Time Complexity.}
Using Kahn's algorithm, sorting $\tau$ and $\pi$ is done in linear time. Then the loop touches every node a single time. Inside the loop, the first line takes linear time with respect to the number of children of the node and the second line takes log-linear time with respect to the number of children. Since we use HashMaps, the last instructions take effectively constant time (because hash code is computed from the address of the node and not its content).

So for general DAG, the algorithm runs in time at most log-quadratic in the number of nodes. Note however that for DAGs with bounded number of children per node as well as for DAGs with bounded number of parents per nodes, the algorithm is log-linear. In fact, the algorithm is log-linear with respect to the total number of edges in the graph. For this reason, the algorithm is still only log-linear in input size. It also follows that the algorithm is always at most log-linear with respect to the tree or formula underlying the DAG, which may be much larger than the DAG itself. Moreover, there exists cases where the algorithm is log-linear in the number of nodes, but the underlying tree is exponentially larger. The full binary symmetric graph is such an example.

\section{Word Problem on Orthocomplemented Bisemilattices}
\label{sect:wordprob}
We will use the previous algorithm for DAG equivalence applied to a formula in the language of bisemilattices $(S, \land, \sqcup)$ to account for commutativity (axioms L1, L1'), but we need to combine it with the remaining axioms. From now on we work with axioms L1-L8, L1'-L8' in Table~\ref{tab:laws}. The plan is to express those axioms as reduction rules. Of rules L2-L8 and L2'-L8', all but L8 and L8' reduce the size of the term when applied from left to right, and hence seem suitable as rewrite rules. 

It may seem that the simplest way to deal with de Morgan law is to use it (along with double negation elimination) to transform all terms into negation normal form. It happens, however, that doing this causes troubles when trying to detect application cases of rule L7 (complementation). Indeed, consider the following term:
$$
f = (a \land b) \sqcup \neg(a \land b)
$$
Using complementation it clearly reduces to $1$, but pushing into negation-normal form, it would first be transformed to $(a\land b) \sqcup (\neg a \lor \neg b)$.  To detect that these two disjuncts are actually opposite requires to recursivly verify that $\neg (a\land b) = (\neg a \lor \neg b)$.

It is actually simpler to apply de Morgan law  the following way:
$$
x\land y = \neg(\neg x \sqcup \neg y)
$$
Instead of removing negations from the formula, we remove one of the binary semilattice operators. (Which one we keep is arbitrary; we chose to keep $\sqcup$.) Now, when we look if rule L7 can be applied to a disjunction node (i.e. two children $y$ and $z$ such that $y = \neg z$), there are two cases: if $x$ is not itself a negation, i.e. it starts with $\sqcup$, we compute $\neg x$ code from the code of $x$ in constant time. If $x = \neg x'$ then $\neg x \sim x'$ so the code of $\neg x$ is simply the code of $x$, in constant time as well. Hence we obtain the code of all children and their negation and we can sort those codes to look for collisions, all of it in time linear in the number of children.

We now restate the axioms L1-L8 ,L1'-L8' in this updated language in Table~\ref{tab:lawsA}.
\begin{table}[hbt]
    \[\begin{array}{rl|rl}
         A1: & \bigsqcup(..., x_i,x_j,...) = \bigsqcup(..., x_j,x_i,...)  & A1': & \neg\bigsqcup(\neg x, \neg y) = \neg \bigsqcup(\neg y, \neg x)\\
         A2: & \bigsqcup(\vec x, \bigsqcup( \vec y)) = \bigsqcup(\vec x, \vec y)  & A2': & \neg\bigsqcup(\neg\vec x, \neg\neg\bigsqcup( \neg\vec y)) = \neg\bigsqcup(\neg\vec x, \neg\vec y)\\
         
            &   \bigsqcup(x)=x & & \\
         A3: & \bigsqcup(x,x, \vec y) = \bigsqcup(x, \vec y)  & A3': & \neg \bigsqcup(\neg x, \neg x, \neg \vec y) = \neg\bigsqcup(\neg x, \neg \vec y) \\
         A4: & \bigsqcup(1, \vec x)= 1  & A4': & \neg \bigsqcup(\neg 0, \neg \vec y)= 0 \\
         A5: & \bigsqcup(0, \vec x) = \bigsqcup(\vec x)  & A5': & \neg\bigsqcup(\neg 1, \neg\vec x) = \neg\bigsqcup(\neg \vec x)\\
         A6: & \neg \neg x = x  &  &  \\
         A7: & \bigsqcup(x,\neg x, \vec y) = 1  & A7': & \neg \bigsqcup(\neg x,\neg\neg x, \neg\vec y) = 0 \\
         A8: & \neg\bigsqcup(x_1,...x_i) = \neg\bigsqcup(\neg\neg x_1,...\neg\neg x_i)  & A8': &  \neg\neg\bigsqcup(\neg x_1,...\neg x_i) = \bigsqcup(\neg x_1,...\neg x_i) \\
    \end{array}\]    
    \caption{Laws of algebraic structures $(S, \sqcup, 0, 1, \neg)$, equivalent to L1-L8, L1-L8' under de Morgan transformation.}
    \label{tab:lawsA}
\end{table}

It is straightforward and not surprising that axiom A8 as well as A1'-A8' all follow from axioms A1-A7, so A1-A7 are actually complete for our theory. 

\subsection{Confluence of the Rewriting System}

In our equivalence algorithm, A1 is taken care of by the arbitrary but consistent ordering of the nodes. Axioms A2-A7 form a term rewriting system. Since all those rules reduce the size of the term, the system is terminating in a number of step linear in the size of the term. We will next show that it is confluent. We will thus obtain the existence of a normal form for every term, and will finally show how our algorithm computes that normal form.

\begin{definition}
Consider a pair of reduction rules $l_0\rightarrow r_0$ and $l_1\rightarrow r_1$ with disjoint sets of free variables such that $l_0 $ = $D[s]$, $s$ is not a variable and $\sigma$ is the most general unifier of $\sigma s = \sigma l_1$. $(\sigma r_0, (\sigma D)[\sigma r_2])$ is called a critical pair.
\end{definition}
Informally, a critical pair is a most general pair of term (with respect to unification) $(t_1, t_2)$ such that for some $t_0$, $t_0 \rightarrow t_1$ and $t_0 \rightarrow t_2$ via two ``overlapping'' rules. They are found by matching the left-hand side of a rule with a non-variable subterm of the same or another rule. 
\begin{example}[Critical Pairs]\leavevmode
\begin{enumerate}
\item Matching left-hand side of A6 with the subterm $\neg x$ of rule A7, we obtain the pair
$$
(1, \bigsqcup(\neg x, x, \vec y))
$$
which arises from reducing the term $t_0 = \bigsqcup(\neg x,\neg \neg x, \vec y)$ in two different ways.
\item Matching left-hand sides of A2 and A7 gives
$$
(\bigsqcup(\vec x, \vec y, \neg \bigsqcup(\vec y)), 1)
$$
which arise from reducing $\bigsqcup(\vec x, \bigsqcup(\vec y), \neg \bigsqcup(\vec y))$
\item Matching left-hand sides of A5 and A7 gives
$$
(\neg 0, 1)
$$
which arise from reducing $0 \sqcup \neg 0$ in two different ways.
\end{enumerate}
\end{example}
\begin{proposition}[{\cite[Chapter~6]{baaderTermRewritingAll1998}}]
\label{critpairslemma}
A terminating term rewriting system is confluent if and only if all critical pairs are joinable.
\end{proposition}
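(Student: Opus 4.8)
The plan is to prove the two directions separately; the forward implication is immediate, while the backward one reduces, via Newman's lemma, to a finite case analysis on overlapping redexes. For ``confluent $\RA$ all critical pairs joinable'': by the definition of a critical pair, any critical pair $(t_1,t_2)$ is obtained from a single term $t_0$ with $t_0 \rightarrow t_1$ and $t_0 \rightarrow t_2$ (using the most general unifier), so confluence directly supplies a $t_4$ with $t_1 \starra t_4$ and $t_2 \starra t_4$, i.e.\ joinability. This direction uses neither termination nor the critical-pair hypothesis.

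For the converse, assume the system is terminating and every critical pair is joinable. Since it is terminating, Newman's lemma says it suffices to prove \emph{local} confluence: whenever $t \rightarrow t_1$ and $t \rightarrow t_2$, the terms $t_1$ and $t_2$ have a common reduct. Fix the two one-step rewrites, applying $l_0 \rightarrow r_0$ at a position $p_0$ of $t$ and $l_1 \rightarrow r_1$ at a position $p_1$. I would distinguish three cases by the relative positions. (i) If $p_0$ and $p_1$ lie in disjoint subterms, the rewrites operate on non-overlapping parts of $t$ and commute: applying the other rule to each of $t_1, t_2$ yields the same term. (ii) If one position is below the other, say $p_1$ is strictly below $p_0$ and falls inside the part of $t$ matched by a variable $x$ of $l_0$, then the inner redex sits within the instance of $x$; rewriting $l_0 \rightarrow r_0$ may copy or delete that instance, but the diagram still closes by performing the inner rewrite in every surviving copy of that subterm (and once on the $t_2$ side), which needs a small induction on the number of occurrences of $x$ in $r_0$. (iii) In the remaining case $p_1$ is below $p_0$ and reaches into the non-variable part of $l_0$: the overlapping part of $t$ is then an instance $\sigma'$ of a critical pair $(s_1,s_2)$ as in the definition above. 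By hypothesis $s_1$ and $s_2$ are joinable, say $s_1 \starra u$ and $s_2 \starra u$; since $\starra$ is stable under substitutions and closed under contexts, the corresponding instance inside $t$ is joinable, closing the diagram. The symmetric situation, with the two rules swapped, is handled identically.

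The main obstacle is case (ii): because a rule need not be left-linear and may duplicate or erase variables, ``doing the inner rewrite'' must be carried out simultaneously in all residual copies, so one needs a careful treatment of positions, residuals, and the closure of $\starra$ under substitution in order to make the book-keeping precise. Case (iii) is the conceptual heart but is short once one has observed that a non-variable overlap is, up to further instantiation, exactly a critical pair, and that joinability lifts through substitutions and contexts; case (i) is routine. Invoking Newman's lemma (terminating and locally confluent implies confluent) then yields confluence. In the sequel only the ``if'' direction is used, applied to the finite list of critical pairs of the rules A2--A7 exhibited above.
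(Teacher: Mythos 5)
The paper does not prove this proposition; it is quoted from Baader and Nipkow, Chapter~6, and your sketch is precisely the standard argument given there: the easy direction from the fact that a critical pair is an instance of a local peak, and the converse via Newman's lemma plus the Knuth--Bendix/Huet case analysis on parallel positions, variable overlaps (where non-left-linearity forces you to replay the inner step in every residual copy), and proper critical overlaps. Your outline is correct and correctly identifies where the bookkeeping effort lies, so there is nothing to reconcile with the paper beyond the citation.
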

In the first of the previous examples, the pair is clearly joinable by commutativity and a single application of rule A7 itself. The second example is more interesting. Observe that $\bigsqcup(\vec x, \vec y, \neg \bigsqcup(\vec y)) = 1$ is a consequence of our axiom, but the left part cannot be reduced to 1 in general in our system. To solve this problem we need to add the rule A9: $\bigsqcup(\vec x, \vec y, \neg \bigsqcup(\vec y)) = 1$. Similarly, the third example forces us to add A10: $\neg 0 = 1$ to our set of rules. From A10 and A6 we then find the expected critical pair A11: $\neg 1 = 0$.

\begin{table}[hbt]
    \centering 
    \[\begin{array}{rl}
         A1: & \bigsqcup(..., x_i,x_j,...) = \bigsqcup(..., x_j,x_i,...)  \\
         A2: & \bigsqcup(\vec x, \bigsqcup( \vec y)) = \bigsqcup(\vec x, \vec y)  \\
            & \bigsqcup(x)=x \\
         A3: & \bigsqcup(x,x, \vec y) = \bigsqcup(x, \vec y)  \\
         A4: & \bigsqcup(1, \vec x)= 1  \\
         A5: & \bigsqcup(0, \vec x) = \bigsqcup(\vec x)  \\
         A6: & \neg \neg x = x \\
         A7: & \bigsqcup(x,\neg x, \vec y) = 1   \\
         A9: & \bigsqcup(\vec x, \vec y, \neg \bigsqcup(\vec y)) = 1  \\
         A10: & \neg 0 = 1  \\
         A11: & \neg 1 = 0   \\
    \end{array}\]
    \caption{Terminating and confluent set of rewrite rules equivalent to L1-L8, L1'-L8'
    \label{tab:truelawsA}}
\end{table}

\subsection{Complete Terminating Confluent Rewrite System}
The analysis of all possible pairs of rules to find all critical pairs is straightforward. It turns out that the A9, A10 and A11 are the only rules we need to add to our system to obtain confluence. The complete list of critical pairs for rules A2-A11 has been checked but is omitted for matter of space. All those pairs are joinable, i.e. reduce to the same term, which implies, by Proposition~\ref{critpairslemma}, that the system is confluent. Table~\ref{tab:truelawsA} shows the complete set of reduction rules (as well as commutativity).

Since the system A2-A11 considered over the language $(S, \bigsqcup, \neg, 0, 1)$ modulo commutativity of $\bigsqcup$ is terminating and confluent, it implies the existence of a normal form reduction. For any term $t$, we note its normal form $t\downarrow$. In particular, for any two terms $t_1$ and $t_2$, $t_1 = t_2$ in our theory if and only if $t_1\starlra t_2$. We finally reach our conclusion: An algorithm that computes the normal form (modulo commutativity) of any term gives a decidable procedure for the word problem for orthocomplemented bisemilattices

\section{Algorithm and Complexity}
\label{sect:algo}
The rewriting system readily gives us a quadratic algorithm. Indeed, using our base algorithm for DAG equivalence, we can check for application cases of any one of our rewritting rule A2-A11 in linear time. Since a term can only be reduced up to $n$ times, the total time spent before finding the normal form of a term is at most quadratic. It is however possible to find the normal form of a term in a single pass of our equivalence algorithm, resulting in a more efficient algorithm.

\subsection{Combining Rewrite Rules and Tree Isomorphism}
We give an overview on how to combine rules A2-A7, A9, A10, A11 within the tree isomorphism algorithm, which we present using Scala-like~\footnote{\url{https://www.scala-lang.org/}} pseudo code in Figure~\ref{algo:fullalgorithm}. For conciseness, we omit the dynamic programming optimizations allowed  by structure sharing in DAGs (which would store the normal form and additionally check if a node was already processed.) For each rule, we indicate the most relevant lines of the algorithm  in Figure~\ref{algo:fullalgorithm}. 
\RestyleAlgo{tworuled}
\LinesNotNumbered

\paragraph{A2} (Associativity, Lines 10, 20, 32, 42) When analysing a $\bigsqcup$ node, after the recursive call, find all children that are $\bigsqcup$ themselves and replace them by their own children. 
This is simple enough to implement but there is actually a caveat with this in term of complexity. We will come back to it in section \ref{sect:algo}.

\paragraph{A3} (Idempotence, Lines 8, 31, 35 ) This corresponds to the fact that we eliminate duplicate children in disjunctions. When reaching a $\bigsqcup$ node, after having sorted the code of its children, remove all duplicates before computing its own code.

\paragraph{A4, A5} (Bounds, Lines 8, 31, 35, 11, 36)  To account for those axioms, we reserve a special code for the nodes $1$ and $0$. For A4, when we reach some $\bigsqcup$ node, if it has $1$ as one of its children, we accordingly replace the whole node by $1$. For A5, we just remove nodes with the same codes as $0$ from the parent node before computing its own code.

\paragraph{A6} (Involution, Lines 17, 22) When reaching a negation node, if its child is itself a negation node, replace the parent node by its grandchildren before assigning it a code.

\paragraph{A7} (Complement, Lines 11, 36) As explained earlier, our representation of nodes let us do the following to detect cases of A7: First remember that we already applied double negation elimination, so that two ``opposite'' nodes cannot both start with a negation. Then we can simply separate the children between negated and non-negated (after the recursive call), sort them using their assigned code and look for collisions.

\paragraph{A9} (Also Complement, Lines 11, 36) This rule is slightly more tricky to apply. When analysing a $\bigsqcup$ node $x$, after computing the code of all children of $x$, find all children of the form $\neg \bigsqcup$. For every such node, take the set of its own children and verify if it is a subset of the set of all children of $x$. If yes, then rule A9 applies. Said otherwise, we look collisions between grandchildren (through a negation) and children of every $\bigsqcup$ node.

\paragraph{A10, A11} (Identities, Lines 17, 26) These rules are simple. In a $\neg$ node, if its child has the same code as $0$ (resp $1$), assign code $1$ (resp $0$) to the negated node.

\subsection{Case of Quadratic Runtime for the Basic Algorithm}

All the rules we introduced in the previous section into Algorithm~$\ref{algo:dag}$ take time (log)linear in the number of children of a node to apply, which is not more than the time we spent in the DAG/tree isomorphism algorithm. For A3, Checking for duplicates is done in linear time in an ordered data structure. A4 and A5 (Bounds) consist in searching for specific values, which take logarithmic time in the size of the list. A6 (Involution) takes constant time. A7 (Complement) Is detected by finding a collision between two separate ordered lists, also easily done in (log) linear time. A9 (Also complement) consists in verifying if grandchildren of a node are also children, and since children are sorted this takes log-linear time in the number of grandchildren. Since a node is the grandchild of only one other node, the same computation as in the original algorithm holds. A10 and A11 take constant time. Hence, the total time complexity is $\mathcal{O}(n\log(n))$, as in the algorithm for tree isomorphism.

As stated previously, the time complexity analysis crucially rely on the fact that in a tree, a node is never the child (or grandchild) of more than one node. However, this is generally not true in the presence of associativity. Indeed consider the term represented in Figure \ref{fig:termassociativecounterexample}.
\begin{figure}[hbt]
\center
\begin{tikzpicture}[node distance={15mm}, main/.style = {draw, circle}] 
\node[main] (1) {$\bigsqcup$}; 
\node[main] (2) [right of=1] {$\bigsqcup$};
\node[main] (3) [right of=2] {$\bigsqcup$}; 
\node[main] (4) [right of=3] {$\bigsqcup$};
\node[main] (5) [right of=4] {$\bigsqcup$};
\draw[->] (1) -- (2);
\draw[->] (2) -- (3);
\draw[->] (3) -- (4);
\draw[->] (4) -- (5);

\node[main] (6) [below right of=1] {$x_1$};
\node[main] (7) [below right of=2] {$x_2$};
\node[main] (8) [below right of=3] {$x_3$};
\node[main] (9) [below right of=4] {$x_4$};
\node[main] (10) [above right of=5] {$x_5$};
\node[main] (11) [below right of=5] {$x_6$};
\draw[->] (1) -- (6);
\draw[->] (2) -- (7);
\draw[->] (3) -- (8);
\draw[->] (4) -- (9);
\draw[->] (5) -- (10);
\draw[->] (5) -- (11);
\end{tikzpicture} 
\caption{A term with quadratic runtime}
\label{fig:termassociativecounterexample}
\end{figure}
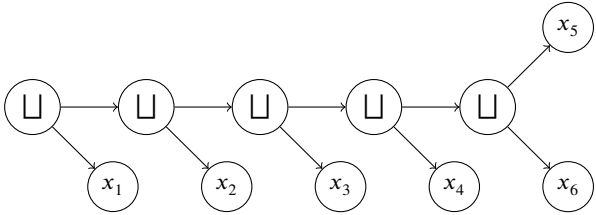
The 5th $\bigsqcup$ has 2 children, but after applying A2, the 4th has 3 children, the 3rd has 4 children and so on. On the generalization of such an example, since an $x_i$ is the child of all higher $\bigsqcup$, our key property does not hold and the algorithm runtime would be quadratic. It's not hard to verify Of course, such a simple counterexample is easily solved by applying a leading pass of associativity reduction before actually running the whole algorithm. It turns out however that it is not sufficient, since cases of associativity can appear after the application of the other A-rules.

In fact, there is only one rule that can creates case of rule A2, and this rule is A6 (Involution). The remaining rules whose right-hand side can start with a $\bigsqcup$ have their left-hand side already starting with $\bigsqcup$. It may seem simple enough to also apply double negation elimination in a leading pass, but unfortunately, cases of A6 can also be created from other rules. It is easy to see, for similar reasons, that only the application of A2b ($\bigsqcup(x) = x$) can create such cases. And unfortunately, such cases of A2b can arise from rules A3 and A5 which can only be detected using the full algorithm. To summarize, the typical problematic case is depicted in Figure~\ref{fig:termassociativecounterexample2}. This term is clearly equivalent to $\bigsqcup(x_1, x_2, x_3, x_4)$, but to detect it we must first find that $z_1$ and $z_2$ are equivalent to $0$, so we cannot simply solve it with an early pass.

\begin{figure}[hbt]
\center
\begin{tikzpicture}[node distance={13mm}, main/.style = {draw, circle}] 
\node[main] (1) {$\bigsqcup$}; 
\node[main] (2) [right of=1] {$\neg$};
\node[main] (3) [right of=2] {$\bigsqcup$}; 
\node[main] (4) [right of=3] {$\neg$};
\node[main] (5) [right of=4] {$\bigsqcup$};
\node[main] (6) [right of=5] {$\neg$};
\node[main] (7) [right of=6] {$\bigsqcup$}; 
\node[main] (8) [right of=7] {$\neg$};
\node[main] (9) [right of=8] {$\bigsqcup$};
\draw[->] (1) -- (2);
\draw[->] (2) -- (3);
\draw[->] (3) -- (4);
\draw[->] (4) -- (5);
\draw[->] (5) -- (6);
\draw[->] (6) -- (7);
\draw[->] (7) -- (8);
\draw[->] (8) -- (9);

\node[main] (10) [below right of=1] {$x_1$};
\node[main] (11) [below right of=3] {$z_1{\sim} 0$};
\node[main] (12) [below right of=5] {$x_2$};
\node[main] (13) [below right of=7] {$z_2{\sim} 0$};
\node[main] (14) [above right of=9] {$x_3$};
\node[main] (15) [below right of=9] {$x_4$};
\draw[->] (1) -- (10);
\draw[->] (3) -- (11);
\draw[->] (5) -- (12);
\draw[->] (7) -- (13);
\draw[->] (9) -- (14);
\draw[->] (9) -- (15);
\end{tikzpicture} 
\caption{A non-trivial term with quadratic runtime}
\label{fig:termassociativecounterexample2}
\end{figure}
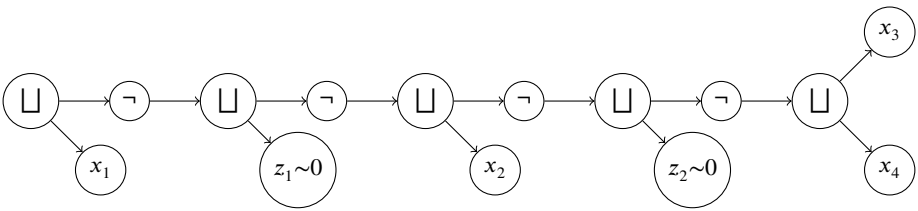

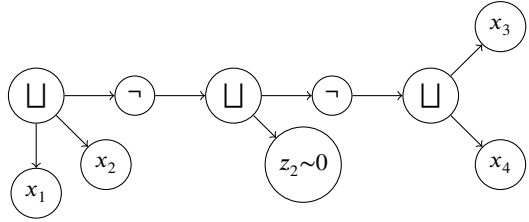
\begin{figure}[hbt]
\center
\begin{tikzpicture}[node distance={13mm}, main/.style = {draw, circle}] 
\node[main] (1) {$\bigsqcup$}; 
\node[main] (6) [right of=1] {$\neg$};
\node[main] (7) [right of=6] {$\bigsqcup$}; 
\node[main] (8) [right of=7] {$\neg$};
\node[main] (9) [right of=8] {$\bigsqcup$};

\draw[->] (1) -- (6);
\draw[->] (6) -- (7);
\draw[->] (7) -- (8);
\draw[->] (8) -- (9);

\node[main] (10) [below  of=1] {$x_1$};
\node[main] (12) [ below right of=1] {$x_2$};
\node[main] (13) [below right of=7] {$z_2{\sim} 0$};
\node[main] (14) [above right of=9] {$x_3$};
\node[main] (15) [below right of=9] {$x_4$};
\draw[->] (1) -- (10);
\draw[->] (1) -- (12);
\draw[->] (7) -- (13);
\draw[->] (9) -- (14);
\draw[->] (9) -- (15);
\end{tikzpicture} 
\caption{the term of Figure~\ref{fig:termassociativecounterexample2} during the algorithm's execution}
\label{fig:termafteronereduction}
\end{figure}

\subsection{Log-Linear Time}
Fortunately, we can solve this problem at a logarithmic-only price.
Observe that if we are able to detect early nodes which would cancel to $0$, the problem would not exist: When analysing a node, we would first call the algorithm on all subnodes equivalent to 0, remove them and then when there is a single children left, remove the trivial disjunct, the double negation and the successive disjunction (as in Figure~\ref{fig:termassociativecounterexample2}) before doing the recursive call on the unique nontrivial child. However, we of course cannot know in advance which child will be equivalent to $0$.

Moreover note (still using Figure~\ref{fig:termassociativecounterexample2}) that if the $z$-child is as large as the non-trivial node, then even if we do the ``useless'' work, we at least obtain that the size a tree is divided by two, and hence the potential depth of the tree as well. By standard complexity analysis, the time penalty would only be a logarithmic factor. This suggests the following solution.

When analysing a node, make the recursive calls on children in order of size, starting with the smallest up to the second biggest. If any of those children are non-zero, proceed as normal. If all (but possibly the last) children are equivalent to zero, then replace the current node by its biggest (and at this point non-analyzed) child, i.e. apply second half of rule A2 (associativity). If applicable, apply double negation elimination and associativity as well before continuing the recursive call.

We illustrate this on the example of Figure~\ref{fig:termassociativecounterexample2}. Consider the algorithm when reaching the second $\bigsqcup$ node. There are two cases:
\begin{enumerate}
    \item Suppose $z_1$ is a smaller tree than the non-trivial child. In this case the algorithm will compute a code for $z_1$, find that it is 0 and delete it. Then the non trivial node is a single child so the whole disjuntion is removed. Hence, the double negation can be removed and the two consecutive disjunction of $x_1$ and $x_2$ merged, obtaining the term illustrated in Figure \ref{fig:termafteronereduction}. In particular we did not compute a code for the two deleted $\bigsqcup$ nodes, which is exactly what we wanted for our initial analysis.
    \item Suppose $z_1$ is larger tree than the non-trivial child. In this case, we would first recursively compute the code of the non-trivial child and then detect that $z_1\sim 0$. We indeed computed the code of the disjunction that contains $x_2$ when it was unnecessary since we apply associativity anyway. This ``useless'' work consists in sorting and applying axioms to the true children of the node (in this case $x_2, x_3$ and $x_4$) and takes time quasilinear in the number of such children. In particular, it is bounded by the size of the subtree itself and we know it is the smallest of the two. 
\end{enumerate}

\begin{figure}[p]
\lstinputlisting[language=scala,numbers=left]{FullAlgorithm.scala}
\caption{Final Algorithm. We assume \lstinline{distinct} runs in log-linear time on sorted lists. The \lstinline{checkForContradiction} method detects application cases of A7 and A9 (Complement) }

\label{algo:fullalgorithm}
\end{figure}

Note that the exact same situation can arrise from the use of rule A3 (idempotence), but here trivialy the two subtrees must have the same number of (real) subnodes, so that the same reasonning holds.

We compute the penalty of useless work we incur by computing children of a node $n$ in the wrong order, i.e. by computing a non-0 child $n_w$ when all other are 0. $n_w$ cannot be the largest child of $n$ for otherwise we would have found that all other children are $0$ before needing to compute $n_w$. Hence 
$
|n_w| \leq |n|/2
$.
It follows that the total amount of useless work is bounded by $\log(|n|) \cdot W(n)$, where
$$
W(n) \leq |n|/2 + \sum_i W(n_i)
$$
Where $\sum_i |n_i| < |n|$
It is clear that $W(n)$ is maxed when $n$ has exactly two children of equal size:
$$
W(n) \leq |n|/2 + 2\cdot W(n/2)
$$
Finally, by the Master Theorem for complexity or simply by observing that we can divide $n$ by $2$ only $\log(n)$ times and hence
$$
W(n) \leq \sum_{m=1}^{\log(n)} 2^m\cdot |n|/2^m
$$
we obtain $W(n) = \mathcal{O}(|n|\log(|n|))$ and hence the total runtime is $\mathcal{O}\left(n(\log n)^{2}\right)$.

\section{Conclusion}
We have described a decision procedure with log-linear time complexity for the word problem on orthocomplemented bisemilattices.
This algorithm can also be simplified to apply to weaker theories. In the other direction, we believe it can be reinforced to decide some stronger theories (still weaker than Boolean algebras) in polynomial time.
While the word problem for orthocomplemented \emph{lattices} was known to be in PTIME \cite{huntComputationalComplexityAlgebra1987} and as such the membership of orthocomplemented \emph{bisemilattices} in PTIME may not come as a surprise, this is, to the best of our knowledge, the first time that this result was explicitly been stated, and the first time that an algorithm with such low worst-case complexity was proposed for a similar problem.

Our algorithm is efficient and, according to our preliminary experience, easy to implement in practice. It can be used as an approximation for Boolean algebra equivalence, and we plan to use it as the basis of a kernel for a proof assistant. We also envision possible uses of the algorithm in SMT and SAT solvers. Our algorithm notably does not perform any form of instantiation of variables, which is one reason why it can deal with associativity and commutativity efficiently.  It is able to detect many natural and non-trivial cases of equivalence even on formulas that may be too large for existing solvers to deal with, so it may also complement an existing repertoire of subroutines used in more complex reasoning tasks. For a minimal working implementation in Scala closely following Figure~\ref{algo:fullalgorithm}, see \url{https://github.com/epfl-lara/OCBSL}.

{\raggedcolumns
\bibliographystyle{splncs04}
\bibliography{biblio.bib,vkuncak.bib}
}

\vfill

{\small\medskip\noindent{\bf Open Access} This chapter is licensed under the terms of the Creative Commons\break Attribution 4.0 International License (\url{http://creativecommons.org/licenses/by/4.0/}), which permits use, sharing, adaptation, distribution and reproduction in any medium or format, as long as you give appropriate credit to the original author(s) and the source, provide a link to the Creative Commons license and indicate if changes were made.}

{\small \spaceskip .28em plus .1em minus .1em The images or other
third party material in this chapter are included in the\break
chapter's Creative Commons license, unless indicated otherwise in a
credit line to the\break material.~If material is not included in
the chapter's Creative Commons license and\break your intended use
is not permitted by statutory regulation or exceeds the
permitted\break use, you will need to obtain permission directly
from the copyright holder.}

\medskip\noindent\includegraphics{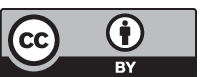}

\end{document}